\documentclass[letterpaper, 10 pt, conference]{ieeeconf}

\IEEEoverridecommandlockouts
\usepackage{geometry}
\newgeometry{top=0.75in, bottom=0.75in, right=0.75in, left=0.75in}

\usepackage[utf8]{inputenc}
\usepackage{amsmath, amssymb, amsfonts}
\usepackage{graphicx}
\usepackage{booktabs}
\usepackage{algorithm}
\usepackage{algorithmic}
\usepackage{float}
\usepackage{cite}
\usepackage{xcolor}
\usepackage[font=normalsize]{caption}
\usepackage[colorlinks=true, linkcolor=blue, citecolor=blue, urlcolor=blue]{hyperref}

\usepackage{amsthm}

\theoremstyle{plain}
\newtheorem{proposition}{Proposition}

\theoremstyle{definition}
\newtheorem{remark}{Remark}

\title{\LARGE \bf {From Points to Sets: Set-Based Safety Verification in the Latent Space}}

\author{Wenyuan Wu$^{1}$, Peng Xie$^{1}$, Zhen Zhang$^{1}$, Yanliang Huang$^{1}$, Karl H. Johansson$^{2}$, and Amr Alanwar$^{1}$%
\thanks{$^{1}$ School of Computation, Information and Technology, Technical University of Munich, Germany.
        (Email: {\tt\small \{wenyuan.wu, p.xie, zhenzhang.zhang, yanliang.huang, alanwar\}@tum.de})}%
\thanks{$^{2}$ Division of Decision and Control Systems, KTH Royal Institute of Technology, Stockholm, Sweden.
        (Email: {\tt\small kallej@kth.se})}%
}

\begin{document}

\maketitle
\thispagestyle{empty}
\pagestyle{empty}

\begin{abstract}
We extend latent representation methods for safety control design to set-valued states. Recent work has shown that barrier functions designed in a learned latent space can transfer safety guarantees back to the original system, but these methods evaluate certificates at single state points, ignoring state uncertainty. A fixed safety margin can partially address this but cannot adapt to the anisotropic and time-varying nature of the uncertainty gap across different safety constraints. We instead represent the system state as a zonotope, propagate it through the encoder to obtain a latent zonotope, and evaluate certificates over the worst case of the entire set. On a 16-dimensional quadrotor suspended-load gate passage task, set-valued evaluation achieves 5/5 collision-free passages, compared to 1/5 for point-based evaluation and 2/5 for a fixed-margin baseline. Set evaluation reports safety in 44.4\% of per-head evaluations versus 48.5\% for point-based, and this greater conservatism detects 4.1\% blind spots where point evaluation falsely certifies safety, enabling earlier corrective control. The safety gap between point and set evaluation varies up to $12\times$ across certificate heads, explaining why no single fixed margin suffices and confirming the need for per-head, per-timestep adaptation, which set evaluation provides by construction.
\end{abstract}

\section{Introduction}

Safety-critical control of autonomous systems requires formal guarantees that the system will not violate prescribed safety constraints. Control barrier functions (CBFs) provide such guarantees by enforcing forward invariance of a safe set \cite{c1}. Despite their strong theoretical properties, constructing valid CBFs for high-dimensional nonlinear systems remains challenging due to computational intractability and scalability limitations \cite{c2}.

\begin{figure*}[!t]
\centering
\includegraphics[width=\textwidth]{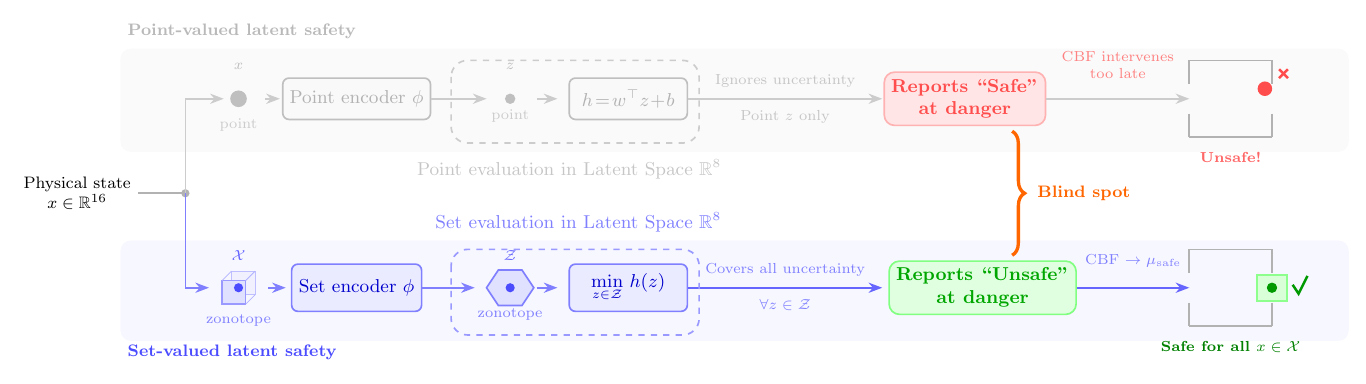}
\caption{Overview of point-valued vs.\ set-valued latent safety evaluation. Both tracks encode the same physical state $x$. Top (gray): point evaluation ignores state uncertainty, reports ``Safe'' at a dangerous moment (blind spot), and causes the CBF to intervene too late to prevent collision. Bottom (blue): set evaluation covers all uncertainty via $\min_{z \in \mathcal{Z}} h(z)$, correctly reports ``Unsafe'' at the dangerous moment and triggers $\mu_{\text{safe}}$ early enough for safe passage.}
\label{fig:pipeline}
\end{figure*}

Recent work by Lutkus et al.~\cite{c3} addresses this challenge by leveraging learned latent representations. In this framework, an encoder maps the high-dimensional system state to a low-dimensional latent space where barrier functions can be more efficiently constructed. Under approximate conjugacy conditions, safety guarantees established in the latent space can be transferred back to the original system. This approach significantly improves scalability and has been demonstrated on systems with up to six dimensions.

However, a key limitation of the work in~\cite{c3} and related latent-space approaches~\cite{c4,c5} is their reliance on pointwise evaluation of safety certificates. In practice, the system state is never known exactly: sensor noise, state estimation error, and model uncertainty imply that the true state lies within a bounded uncertainty set rather than at a single point. Consequently, evaluating a safety certificate only at the nominal estimate can lead to false assurances of safety, as nearby states within the uncertainty set may violate safety constraints.

This limitation is particularly critical in safety-sensitive applications, where robustness to uncertainty is essential. A certificate that is valid only at a single point does not, in general, guarantee safety for the true system state. Instead, safety must be verified over the entire set of states consistent with the available information.

CBFs were formalized by Ames et al.~\cite{c1,c8} and have since become a cornerstone of safety-critical control. A wide range of approaches have been developed to address the challenge of constructing CBFs in complex systems, including sum-of-squares programming \cite{c2}, Hamilton--Jacobi reachability analysis \cite{c9}, and learning-based methods \cite{c10}. More recently, latent-space methods have emerged as a promising direction: Castaneda et al.~\cite{c4} learn in-distribution barrier functions, while Kumar et al.~\cite{c5} design CBFs directly in learned latent representations. Lutkus et al.~\cite{c3} further provide formal guarantees for transferring safety certificates between latent and original spaces. Despite these advances, all existing latent approaches evaluate certificates at single state points.

In parallel, zonotope-based set representations have been widely used for scalable reachability analysis~\cite{zhang2025data,huang2026conformalized} and neural network verification \cite{c7,c11}. These methods enable sound over-approximation of uncertainty propagation through nonlinear transformations, including learned models. However, they have not been integrated with latent-space safety certification.

In this work, we bridge this gap by introducing a set-valued extension of latent safety certification. We represent the system state as a zonotope capturing bounded uncertainty, propagate this set through a learned encoder to obtain a latent set, and evaluate the safety certificate over the worst-case element of this set. This set-based evaluation yields sufficient conditions that guarantee safety for all states within the uncertainty set, rather than only the nominal estimate. As a result, our approach provides robust safety guarantees that explicitly account for state uncertainty while retaining the scalability benefits of latent representations.

We extend the latent guarantee framework of~\cite{c3} to set-valued state representations (Fig.~\ref{fig:pipeline}). More specifically, we make the following contributions.

\begin{enumerate}
    \item We represent states as zonotopes and propagate them through a learned encoder using sound set-based forward passes, yielding latent zonotope sets.
    \item We evaluate barrier certificates over the worst case of the entire latent set, providing sufficient safety conditions for all states within the uncertainty zonotope.
    \item We demonstrate on a 16-dimensional quadrotor suspended-load system that set-valued evaluation achieves 5/5 collision-free passages, compared to 1/5 for point-based and 2/5 for a fixed-margin baseline. Set evaluation detects 4.1\% blind spots where point evaluation falsely certifies safety, and the safety gap varies up to $12\times$ across certificate heads, confirming the need for adaptive per-head, per-timestep margins, which set evaluation provides by construction.
\end{enumerate}

\section{Preliminaries and Problem Formulation}

\subsection{Zonotopes}
%\subsection{Notation}

We denote sets with calligraphic letters (e.g., $\mathcal{X}$, $\mathcal{Z}$, $\mathcal{C}$). For a vector $x \in \mathbb{R}^n$, $\|x\|$ denotes the Euclidean norm.

A zonotope $\mathcal{Z} \subset \mathbb{R}^n$ is a convex set defined by a center $c \in \mathbb{R}^n$ and generator matrix $G \in \mathbb{R}^{n \times q}$:
\begin{equation}
\mathcal{Z} = \langle c, G \rangle = \left\{ c + \sum_{i=1}^{q} G_{(\cdot,i)} \beta_i \;\middle|\; \beta \in [-1,1]^q \right\}
\end{equation}
For a linear function $h(z) = w^\top z + b$, the minimum over a zonotope is computed exactly~\cite{c6}:
\begin{equation}
\min_{z \in \mathcal{Z}} h(z) = w^\top c + b - \sum_{j=1}^{q} |w^\top G_{(\cdot,j)}|
\label{eq:zono_min}
\end{equation}
An affine map $f(z) = Wz + b$ applied to a zonotope yields a zonotope: $f(\mathcal{Z}) = \langle Wc + b, \; WG \rangle$.

\subsection{Zonotope Propagation Through Neural Networks}
We propagate zonotopes through neural network layers. Linear layers apply affine maps. For nonlinear layers (e.g., tanh), we compute a sound over-approximation using a chord-slope linear bound~\cite{c7} with analytically computed error intervals:

For $\varphi = \tanh$ on interval $[l, u]$, the slope $m = (\tanh(u) - \tanh(l))/(u-l)$ yields approximation errors $[\underline{e}, \bar{e}]$ (computed via~\cite[Proposition 10]{c7}). The output zonotope is:
\begin{equation}
\varphi(\mathcal{Z}) \subseteq \langle m \odot c + \tfrac{1}{2}(\bar{e}+\underline{e}), \; [m \odot G, \; \text{diag}(\tfrac{1}{2}(\bar{e}-\underline{e}))] \rangle,
\end{equation}
where $\odot$ denotes element-wise multiplication and $[\cdot, \cdot]$ denotes horizontal concatenation. This enclosure is sound: $\varphi(\mathcal{Z}) \subseteq$ output zonotope.

\subsection{Control Barrier Functions}
Consider a control-affine system $\dot{x} = f(x) + g(x)u$. A function $h: \mathbb{R}^n \to \mathbb{R}$ is a CBF on a set $\mathcal{C} = \{x \mid h(x) \geq 0\}$ if there exists $\alpha > 0$ such that for all $x \in \mathcal{C}$~\cite{c1}:
\begin{equation}
\sup_{u \in \mathcal{U}} \left[ L_f h(x) + L_g h(x) u + \alpha h(x) \right] \geq 0
\end{equation}
where $L_f h = \nabla h \cdot f$ and $L_g h = \nabla h \cdot g$ are the Lie derivatives. This condition ensures forward invariance of $\mathcal{C}$.

\subsection{Latent Representations for Control (Review of~\cite{c3})}
An encoder $E: \mathbb{R}^{n_x} \to \mathbb{R}^{n_z}$ maps the state to a latent space where barrier functions $h: \mathbb{R}^{n_z} \to \mathbb{R}$ are defined. The induced barrier in original space is $\bar{h}(x) = h(E(x))$. Under $\varepsilon$-forward-conjugacy conditions~\cite[Definition 1]{c3}, if $h$ is an $L$-Lipschitz $\gamma$-barrier function in latent space, then the set $\mathcal{C}_x^\omega = \{x \mid \bar{h}(x) \geq -L\varepsilon/\gamma\}$ is forward invariant~\cite[Theorem 3]{c3}.

\subsection{Problem Formulation}
Consider a control-affine system $\dot{x} = f(x) + g(x)u$ with state $x \in \mathbb{R}^{n_x}$ and input $u \in \mathcal{U}$. Due to sensor noise and estimation errors, the true state is not known exactly; instead, it is known to lie within a bounded uncertainty set $\mathcal{X} = \langle x_c, G_x \rangle \subset \mathbb{R}^{n_x}$, represented as a zonotope centered at the nominal estimate~$x_c$.

Given a learned encoder $E: \mathbb{R}^{n_x} \to \mathbb{R}^{n_z}$ and linear barrier functions $h_i(z) = w_i^\top z + b_i$ defined in the latent space, the goal is to find a control input $u$ such that:
\begin{equation}
h_i(E(x)) \geq 0 \quad \forall x \in \mathcal{X}, \quad \forall i = 1, \ldots, N_h
\end{equation}
That is, safety must be guaranteed not only at the nominal state estimate $x_c$, but for \textit{all} states consistent with the current uncertainty. Existing approaches~\cite{c3,c4,c5} verify only $h_i(E(x_c)) \geq 0$, which provides no safety statement for $x \neq x_c$.

\section{Set-Valued Latent Safety Guarantees}

We propose a set-valued latent CBF controller that replaces point evaluation with worst-case evaluation over the state uncertainty set. Algorithm~\ref{alg:setcbf} summarizes the procedure; the remainder of this section details each step and establishes the safety guarantee.

\begin{algorithm}[H]
\caption{Set-Valued Latent CBF Controller}
\label{alg:setcbf}
\begin{algorithmic}[1]
\REQUIRE State $x$, nominal input $\mu_{\text{nom}}$, encoder $E$, certificates $\{(w_i, b_i)\}_{i=1}^3$, latent dynamics $A, B$, CBF parameter $\alpha$, error bounds $\{\varepsilon_{\text{dir},i}\}_{i=1}^3$
\ENSURE Safe input $\mu_{\text{safe}}$
\STATE Build state zonotope $\mathcal{X} = \langle x, \text{diag}(\varepsilon) \rangle$
\STATE Propagate through encoder: $\mathcal{Z} = E(\mathcal{X}) = \langle c_z, G_z \rangle$
\FOR{$i = 1, 2, 3$}
    \STATE $h_{\min,i} \leftarrow w_i^\top c_z + b_i - \sum_j |w_i^\top G_{z,j}|$
    \STATE $L_{f,\min,i} \leftarrow \min_{z \in \mathcal{Z}} w_i^\top A z$
\ENDFOR
\STATE Solve QP: $\mu_{\text{safe}} = \arg\min_\mu \|\mu - \mu_{\text{nom}}\|^2$
\STATE \quad s.t. $w_i^\top B \mu + L_{f,\min,i} + \alpha h_{\min,i} \geq \varepsilon_{\text{dir},i}$
\RETURN $\mu_{\text{safe}}$
\end{algorithmic}
\end{algorithm}

\subsection{Set-Valued State Representation}
We represent state uncertainty as a zonotope $\mathcal{X} = \langle x_c, G_x \rangle \subset \mathbb{R}^{n_x}$, where $x_c$ is the nominal (estimated) state and $G_x = \text{diag}(\varepsilon_1, \ldots, \varepsilon_{n_x})$ encodes per-dimension measurement uncertainty.

\subsection{Set-Valued Latent Encoding}
Given a learned encoder $E$, trained to minimize a composite loss of reconstruction, dynamics prediction, zonotope tightness, and safety-semantic teacher alignment (details in Section~IV-C), composed of alternating dense and tanh layers, we compute the latent set:

\begin{equation}
\mathcal{Z} = E(\mathcal{X}) = \langle c_z, G_z \rangle \subset \mathbb{R}^{n_z}
\end{equation}
via the sound zonotope propagation of Section II-B. By construction, $E(x) \in \mathcal{Z}$ for all $x \in \mathcal{X}$.

\subsection{Set-Valued Certificate Evaluation}
For a linear barrier $h(z) = w^\top z + b$, we evaluate:
\begin{equation}
h_{\min}(\mathcal{Z}) := \min_{z \in \mathcal{Z}} h(z) = w^\top c_z + b - \sum_{j} |w^\top G_{z,j}|
\end{equation}
This is the key difference from~\cite{c3}: where~\cite{c3} evaluates $h(E(x_c))$ at the center only, we evaluate $h_{\min}(E(\mathcal{X}))$ over the entire state uncertainty set. The safety condition becomes:
\begin{equation}
h_{\min}(\mathcal{Z}) \geq 0 \quad \Longrightarrow \quad h(E(x)) \geq 0 \;\; \forall x \in \mathcal{X}
\end{equation}

\subsection{Set-Valued CBF Condition}
The CBF constraint is similarly evaluated over the worst case. For linear $h(z) = w^\top z + b$ with latent dynamics $z^+ = Az + Bu$, both $L_f h(z) = w^\top A z$ and $h(z)$ are linear in $z$, so their combination $L_f h(z) + \alpha h(z) = w^\top(A + \alpha I)z + \alpha b$ is also linear. The exact worst-case over $\mathcal{Z}$ is therefore:
\begin{equation}
\min_{z \in \mathcal{Z}} \left[ w^\top(A + \alpha I) z \right] + \alpha b + w^\top B u \geq \varepsilon_{\text{dyn}}
\end{equation}
where the minimum is computed exactly via the zonotope formula of Section II-A. The term $\varepsilon_{\text{dyn}}$ accounts for latent dynamics prediction error, computed as a directed error bound (see Section IV-B). In Algorithm~\ref{alg:setcbf}, we evaluate $\min_{z \in \mathcal{Z}} w^{\top} A z$ and $\min_{z \in \mathcal{Z}} h(z)$ separately; since $\min(f+g) \geq \min(f) + \min(g)$, this yields a sound (potentially more conservative) bound.

\subsection{Safety Guarantee}
The following result is a direct consequence of the set-valued evaluation combined with~\cite[Theorem 3]{c3}.

\begin{proposition}[Set Safety Guarantee]
Let $h(z) = w^\top z + b$ be $L$-Lipschitz with $L = \|w\|$. Let $\mathcal{X} = \langle x_c, G_x \rangle$ be a state zonotope, $\mathcal{Z} = E(\mathcal{X})$ the latent set obtained via sound zonotope propagation~\cite{c7}, and $(f_z, \mathcal{D}_x)$ be $\varepsilon_{\text{conj}}$-forward-conjugate~\cite[Definition 1]{c3}. Suppose the CBF-QP finds $u$ satisfying:
\begin{equation}
L_g h \cdot u + \min_{z \in \mathcal{Z}}\left[L_f h(z) + \alpha h(z)\right] \geq \varepsilon_{\text{dyn}}
\end{equation}
Then for all $x \in \mathcal{X}$:
\begin{equation}
L_g \bar{h}(x) \cdot u + L_f \bar{h}(x) + \alpha \bar{h}(x) \geq \varepsilon_{\text{dyn}} - L\varepsilon_{\text{conj}}
\end{equation}
where $\bar{h} = h \circ E$, and $\varepsilon_{\text{conj}}$ is the forward conjugacy error.
\hfill $\lrcorner$
\end{proposition}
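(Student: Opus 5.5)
The plan is to chain two inequalities: soundness of the latent enclosure, which moves the bound from the set $\mathcal{Z}$ to each individual encoded state $E(x)$, and then the $\varepsilon_{\text{conj}}$-forward-conjugacy of~\cite[Definition 1]{c3}, which moves it from latent quantities to the original-space quantities of $\bar{h} = h\circ E$. Fix any $x \in \mathcal{X}$ and set $z = E(x)$.

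By soundness of the propagation in Section II-B, $z \in \mathcal{Z}$. Hence
\begin{equation*}
L_f h(z) + \alpha h(z) \geq \min_{z' \in \mathcal{Z}}\left[L_f h(z') + \alpha h(z')\right].
\end{equation*}
Adding $L_g h \cdot u$ to both sides and using the QP constraint gives
\begin{equation*}
L_g h\cdot u + L_f h(E(x)) + \alpha h(E(x)) \geq \varepsilon_{\text{dyn}}.
\end{equation*}
This step is purely set-theoretic. Since $h$ is linear, $L_g h = w^\top B$ is constant in $z$, so it can be taken outside the minimum without loss. Also, $\alpha h(E(x)) = \alpha \bar{h}(x)$ by definition.

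It remains to compare the latent drift $w^\top(f_z(E(x)) + B u)$ with the true drift. The true drift is
\begin{equation*}
\nabla \bar{h}(x)\cdot(f(x)+g(x)u) = w^\top DE(x)\,(f(x)+g(x)u).
\end{equation*}
Forward conjugacy, in the sense used in~\cite[Theorem 3]{c3}, asserts that on $\mathcal{D}_x$
\begin{equation*}
\|DE(x)(f(x)+g(x)u) - f_z(E(x),u)\| \leq \varepsilon_{\text{conj}}.
\end{equation*}
By Cauchy--Schwarz, with $L=\|w\|$,
\begin{equation*}
\left|L_f\bar h(x) + L_g\bar h(x)u - \left(L_f h(E(x)) + L_g h\, u\right)\right| \leq L\varepsilon_{\text{conj}}.
\end{equation*}
Substituting the lower side of this bound into the previous inequality yields the claim, $\geq \varepsilon_{\text{dyn}} - L\varepsilon_{\text{conj}}$, for every $x\in\mathcal{X}$.

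The main obstacle is the conjugacy step. Definition 1 of~\cite{c3} must be read so that it bounds the mismatch between pushed-forward and latent vector fields uniformly over $u\in\mathcal{U}$, and it must apply on all of $\mathcal{X}$; I would therefore add the standing assumption $\mathcal{X}\subseteq\mathcal{D}_x$. If the definition is instead phrased via trajectories or discrete steps, I would first derive the infinitesimal vector-field bound from it by a limit argument, or restate the result in the discrete-time form matching $z^+=Az+Bu$. In either form the Lipschitz constant $L=\|w\|$ transfers the error in exactly the same way.
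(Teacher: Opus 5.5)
Your proposal is correct and follows the same three steps as the paper's proof: soundness of the enclosure gives $E(x)\in\mathcal{Z}$; the QP constraint then bounds the latent expression at every encoded state; and $\varepsilon_{\text{conj}}$-forward-conjugacy with $L=\|w\|$ costs at most $L\varepsilon_{\text{conj}}$. You are more explicit than the paper, whose Step~2 already relabels the latent expression as $L_g\bar h\cdot u+L_f\bar h+\alpha\bar h$ and whose Step~3 invokes \cite[Theorem 3]{c3} only informally; since the paper instantiates $\varepsilon_{\text{conj}}$ as the one-step error $\|E(f(x,u))-(AE(x)+Bu)\|$, your discrete-time restatement is the reading that matches it, whereas a limit argument would not in general preserve the constant.
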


\textit{Proof.} We establish the result in three steps: set containment, worst-case bound, and conjugacy transfer.

\textit{Step 1 (Set containment).} The latent set $\mathcal{Z} = E(\mathcal{X})$ is a sound over-approximation of the true encoder image $\{E(x) \mid x \in \mathcal{X}\}$, obtained via the zonotope propagation through the neural network layers described in Section~II-B~\cite{c7}. In particular:

\begin{equation}
\forall x \in \mathcal{X}: \quad E(x) \in \mathcal{Z}
\end{equation}

\textit{Step 2 (Latent set inequality).} Since $L_f h(z) + \alpha h(z) = w^\top(A+\alpha I)z + \alpha b$ is linear in $z$, evaluating at any $E(x) \in \mathcal{Z}$ yields:
\begin{equation}
L_f h(E(x)) + \alpha h(E(x)) \geq \min_{z \in \mathcal{Z}}\left[L_f h(z) + \alpha h(z)\right]
\end{equation}
Combining with the CBF-QP constraint and using $L_g h \cdot u = w^\top B u$ (constant over $\mathcal{Z}$):
\begin{equation}
\forall x \in \mathcal{X}: \quad L_g \bar{h}(x) \cdot u + L_f \bar{h}(x) + \alpha \bar{h}(x) \geq \varepsilon_{\text{dyn}}
\end{equation}
where $\bar{h} = h \circ E$.

\textit{Step 3 (Conjugacy gap).} The latent dynamics $z^+ = Az + Bu$ are only an approximation of the true next latent state $E(f(x, u))$. Under $\varepsilon_{\text{conj}}$-forward-conjugacy~\cite[Definition 1]{c3}, this approximation error is bounded by $\varepsilon_{\text{conj}}$. Since $h$ is $L$-Lipschitz, the one-step safety transfer incurs a correction of at most $L\varepsilon_{\text{conj}}$, yielding the stated bound via~\cite[Theorem 3]{c3}. $\square$

\begin{remark}
The key advantage over point-based evaluation is that Proposition 1 provides a sufficient safety condition for the \textit{entire} uncertainty set $\mathcal{X}$, not just the center $x_c$. Point-based evaluation provides no safety statement for $x \neq x_c$. This safety guarantee is at the evaluation level: the set-valued certificate is sound over $\mathcal{X}$ by construction (exact zonotope arithmetic). The transfer of this guarantee to the physical system additionally requires a bounded conjugacy gap, discussed in the numerical instantiation below.
\hfill $\lrcorner$
\end{remark}

\textbf{Numerical instantiation.} On the trained model, the worst-case conjugacy error is $\varepsilon_{\text{conj}} = 2.46$ (maximum of $\|E(f(x,u)) - (AE(x)+Bu)\|$ over 547 near-gate validation steps), yielding per-head margins $\varepsilon_{\text{dir},i} - L_i \varepsilon_{\text{conj}}$ of $-2.33$, $-1.48$, and $-0.67$ for $h_z$, $h_y$, $h_E$ respectively. The negative margins indicate that the current latent dynamics model does not yet achieve formally certified safety transfer via Proposition~1. Nevertheless, the set-valued evaluation provides empirically verified safety improvements (Section~V-B). The training objective already includes a dynamics prediction loss $\mathcal{L}_{\text{dyn}} = \|Az + B\mu - z_{\text{next}}\|^2$ (Section~IV-C) that directly penalizes the conjugacy error. The residual gap arises primarily from extreme pendulum states near the training distribution boundary, where the linear latent dynamics model cannot fully capture the nonlinear system response. Closing this gap, for example via targeted data augmentation in high-swing regimes or certified Lipschitz encoder architectures, is an important direction for future work.

\section{Implementation}

\subsection{System Description}
We consider a 3D quadrotor ($m_q = 1.0$ kg, $n_x = 16$) carrying a suspended load ($m_L = 0.3$ kg) via a rigid rod ($L_{\text{rod}} = 0.8$ m), modeled as a 3D spherical pendulum with angles $(\alpha, \beta)$. The full state vector is:
\begin{equation}
x = [p_x, p_y, p_z, v_x, v_y, v_z, \phi, \theta, \psi, \omega_x, \omega_y, \omega_z, \alpha, \beta, \dot\alpha, \dot\beta]^\top
\end{equation}
The load position is coupled to the quadrotor through the pendulum kinematics:
\begin{equation}
p_{\text{load}} = p_{\text{quad}} + L_{\text{rod}} \begin{bmatrix} \sin\alpha \cos\beta \\ \sin\alpha \sin\beta \\ -\cos\alpha \end{bmatrix}
\end{equation}
and the pendulum dynamics are driven by the quadrotor acceleration through gravitational and inertial coupling, yielding a highly nonlinear 16-dimensional system.

The system has an inner-loop attitude controller (PD at 50 Hz), so the outer-loop control input is the acceleration command $\mu = [a_x, a_y, a_z] \in \mathbb{R}^3$. A rectangular gate ($1.2 \times 1.2$ m opening) is placed at $p_x = 10$ m. Safety requires that both the quadrotor body (collision radius 0.15 m) and the suspended load (collision radius 0.05 m) pass through the gate opening without collision.

Compared to the 6-dimensional inverted pendulum used in~\cite{c3}, this system presents three additional challenges: (i) the state dimension is $16$ (vs.\ $6$), requiring the encoder to compress $2.7\times$ more information; (ii) the pendulum-load coupling introduces nonlinear safety constraints that cannot be expressed as simple distance thresholds; and (iii) both the quadrotor and the suspended body must independently satisfy the gate clearance constraint, creating a multi-body safety requirement.

\subsection{Architecture}
\textbf{Encoder.} A 3-layer MLP (Dense-Tanh-Dense-Tanh-Dense) maps $\mathbb{R}^{16} \to \mathbb{R}^{8}$. During set evaluation, zonotopes are propagated through each layer using the constructions of Section II-B.

\textbf{Safety certificates.} Three linear barrier functions in latent space:
\begin{itemize}
    \item $h_z(z) = w_z^\top z + b_z$: z-direction (vertical) gate clearance
    \item $h_y(z) = w_y^\top z + b_y$: y-direction (lateral) gate clearance
    \item $h_E(z) = -w_E^\top z + (E_{\max} - b_E)$: pendulum swing energy bound
\end{itemize}
The weights $w_y, w_z, w_E$ are linear probes fit on the learned latent representation using near-gate training data.

\textbf{Directed dynamics error bound.} Rather than the conservative box bound $|w|^\top \cdot \varepsilon_{\text{dyn}}$, we compute a tighter directed bound by projecting the dynamics residual onto each certificate direction, reducing over-approximation by 1.6--3.7$\times$ (Table~\ref{tab:directed}):
\begin{equation}
\varepsilon_{\text{dir},i} = \text{quantile}_{99.5\%}\left( |w_i^\top (z_{\text{next}}^{\text{true}} - z_{\text{next}}^{\text{pred}})| \right)
\end{equation}

\begin{table}[!htbp]
\caption{Directed vs.\ box dynamics error bounds.}
\label{tab:directed}
\centering
\footnotesize
\begin{tabular}{lccc}
\toprule
\textbf{Head} & $\varepsilon_{\text{box}}$ & $\varepsilon_{\text{dir}}$ & \textbf{Reduction} \\
\midrule
$h_z$ (vertical) & 0.5811 & 0.3579 & 1.6$\times$ \\
$h_y$ (lateral)  & 0.5935 & 0.1604 & 3.7$\times$ \\
$h_E$ (energy)   & 0.2418 & 0.0860 & 2.8$\times$ \\
\bottomrule
\end{tabular}
\end{table}

\textbf{CBF-QP.} At each timestep:
\begin{equation}
\begin{aligned}
\min_{\mu} \;\; & \|\mu - \mu_{\text{nom}}\|^2 \\
\text{s.t.} \;\; & w_i^\top B \mu + \min_{z \in \mathcal{Z}}\!\left[w_i^\top\!(A + \alpha I) z\right] \\
& + \alpha b_i - \varepsilon_{\text{dir},i} \geq 0, \quad i = 1,2,3
\end{aligned}
\label{eq:cbf_qp}
\end{equation}

\begin{remark}[Control influence computation]
Proposition~1 uses the learned $B$ for the Lie derivative $L_g h_i = w_i^\top B$. Our implementation instead computes $L_g h_i$ via finite differences on the known plant dynamics, which yields the true control sensitivity of the composed certificate $\bar{h}_i = h_i \circ E$ up to discretization error $O(\delta^2)$. This does not invalidate the safety assessment: the set-valued terms ($h_{\min}$, $L_{f,\min}$) remain computed in the learned latent space via zonotope arithmetic, and the QP constraint structure is identical. The substitution replaces one source of approximation error (learned $B$) with a more accurate computation, and the dynamics error bound $\varepsilon_{\mathrm{dir}}$ already accounts for the residual between predicted and true latent dynamics.
\hfill $\lrcorner$
\end{remark}

\subsection{Training}
The encoder is trained on 10{,}000 state-action-next\_state tuples $(x_t, u_t, x_{t+1})$, collected via four complementary sampling strategies: random snapshots (25\%), trajectory rollouts (30\%), near-gate focused samples (25\%), and boundary examples near the safety threshold (20\%). Each sample is augmented with a 6-dimensional teacher signal $\phi_{\text{manual}}(x) \in \mathbb{R}^6$ encoding hand-crafted safety semantics: gate distance, vertical clearance, lateral clearance, pendulum swing energy, attitude margin, and a load-gate-clearance metric.

The total loss combines reconstruction fidelity, latent dynamics prediction, certificate head alignment, and set tightness regularization:
\begin{equation}
\mathcal{L}_{\text{rec}} = (1-\tau)\|D(E(\mathcal{X}))_c - x\|^2 + \tau \cdot \text{froRadius}(D(E(\mathcal{X})))
\end{equation}
\begin{equation}
\mathcal{L}_{\text{tight}} = \|G_z\|_F^2
\end{equation}
\begin{equation}
\mathcal{L}_{\text{dyn}} = \|Az + B\mu - z_{\text{next}}\|^2
\end{equation}
\begin{equation}
\mathcal{L}_{\text{teach}} = \|W_T z + b_T - \phi_{\text{manual}}\|^2
\end{equation}
The total training objective is:
\begin{equation}
\mathcal{L} = \lambda_{\text{rec}} \mathcal{L}_{\text{rec}} + \lambda_{\text{tight}} \mathcal{L}_{\text{tight}} + \lambda_{\text{dyn}} \mathcal{L}_{\text{dyn}} + \lambda_{\text{teach}} \mathcal{L}_{\text{teach}} + \mathcal{L}_{\text{reg}}
\label{eq:total_loss}
\end{equation}
where $\mathcal{L}_{\text{reg}}$ aggregates auxiliary regularization terms (right-inverse, isometry, contrastive separation, causal alignment, and control sensitivity losses) with small weights ($\lambda \leq 0.5$). Training uses Adam with learning rate $5 \times 10^{-4}$, batch size 256, for 120 epochs. Loss weights are $\lambda_{\text{rec}} = \lambda_{\text{dyn}} = 1.0$, $\lambda_{\text{tight}} = 0.5$, $\lambda_{\text{teach}} = 5.0$. The tightness loss $\mathcal{L}_{\text{tight}}$ is a key addition beyond~\cite{c3}: without it, the zonotope generators grow through the tanh layers, causing the set over-approximation to become vacuously large.

\begin{figure*}[t]
\centering
\includegraphics[width=0.95\textwidth, trim=50 85 50 40, clip]{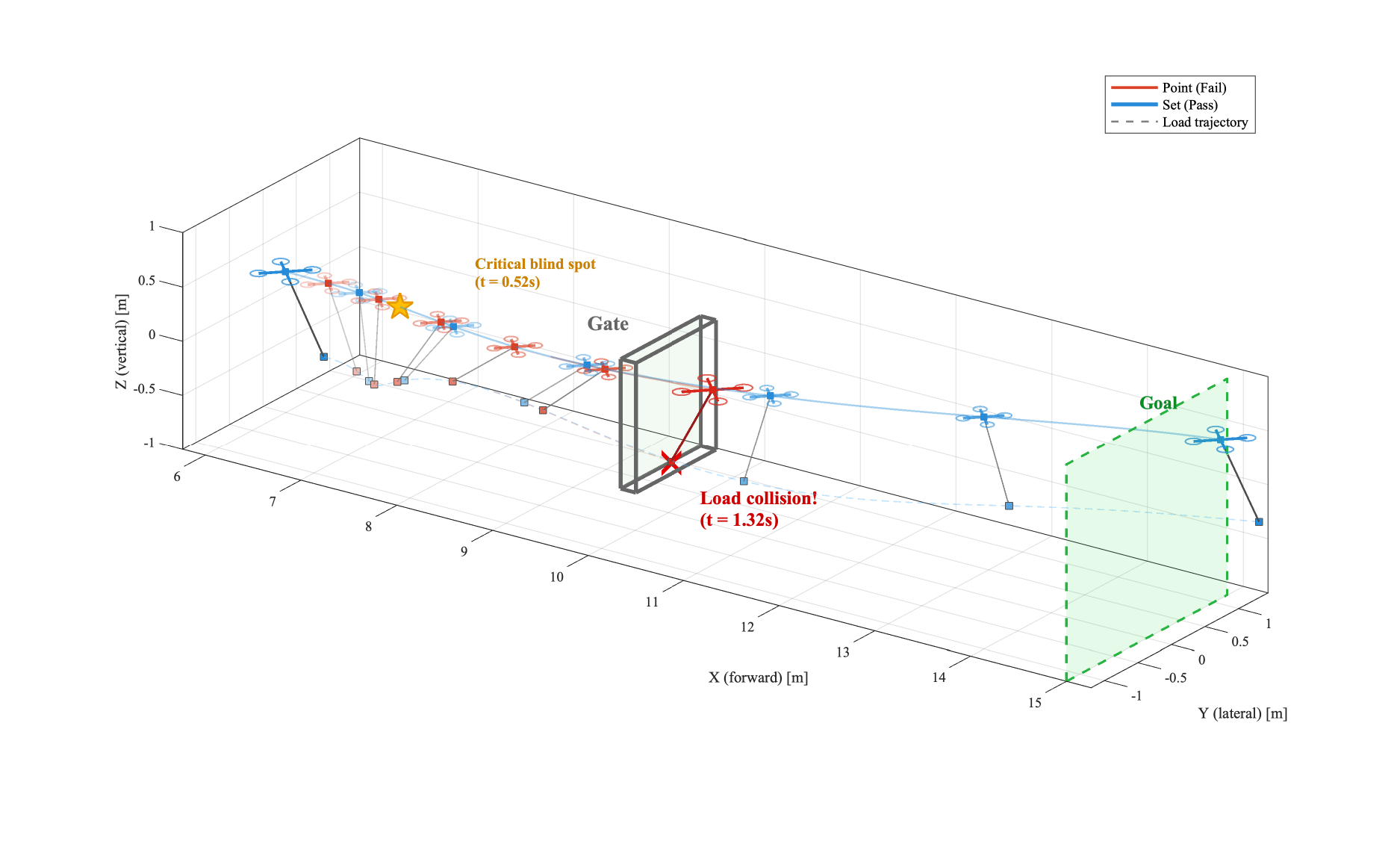}
\caption{3D gate passage comparison on the HARD scenario. SET (blue) passes safely; POINT (red) results in load collision at the gate.}
\label{fig:gate3d}
\end{figure*}

\section{Experiments}

\subsection{Setup}
The state uncertainty is modeled as a diagonal zonotope $\mathcal{X} = \langle x, \mathrm{diag}(\varepsilon) \rangle$ with $\varepsilon_i = 0.05$ for position and velocity states and $\varepsilon_i = 0.02$ for angular states and rates, representing conservative bounds on state estimation error at the 50\,Hz control rate.

The quadrotor must carry a suspended load (rod length 0.8 m) through a rectangular gate (1.2 m $\times$ 1.2 m opening) in 3D space. The task is to safely pass through the gate without collision between the gate frame and either the quadrotor body or the suspended load (Fig.~\ref{fig:gate3d}). The quadrotor approaches the gate along the positive $x$-axis with initial forward velocity $v_x = 2.0$--$2.5$ m/s. The full active control step, which includes encoder set-forward, certificate evaluation, physics-based $L_g$ computation, and QP solve, executes in 0.92 ms\footnote{All timing measurements on Apple M2, MATLAB R2024b.} (Table~\ref{tab:runtime}), accounting for 4.6\% of the 50 Hz control budget. Five canonical scenarios (Table~\ref{tab:scenarios}) test increasing difficulty:

\begin{table}[!htbp]
\caption{Computational cost breakdown at the gate.}
\label{tab:runtime}
\centering
\footnotesize
\begin{tabular}{lcc}
\toprule
\textbf{Component} & \textbf{Time (ms)} & \textbf{\% of 20 ms budget} \\
\midrule
Encoder set-forward (16D$\to$8D) & 0.28 & 1.40\% \\
Certificate eval + $L_g$ + QP solve & 0.64 & 3.20\% \\
\midrule
Full active control step & 0.92 & 4.60\% \\
\bottomrule
\end{tabular}
\end{table}

\begin{table}[!htbp]
\caption{Five canonical scenarios with increasing difficulty.}
\label{tab:scenarios}
\centering
\resizebox{\columnwidth}{!}{
\begin{tabular}{lccl}
\toprule
\textbf{Scenario} & $\Delta p_x$ & $\alpha_0$ / $\beta_0$ ($\dot{\alpha}_0$ / $\dot{\beta}_0$) & \textbf{Difficulty} \\
\midrule
GOOD & 7 m & 5\textdegree{} / 0\textdegree{} & Low \\
BAD & 3 m & 25\textdegree{} / 10\textdegree{} & Medium \\
HARD & 4 m & 30\textdegree{} / 0\textdegree{} & Med-High \\
HARD1 & 2 m & 35\textdegree{} / 0\textdegree{} ($\dot{\alpha}_0\!=\!40$\textdegree/s) & High \\
HARD2 & 2 m & 20\textdegree{} / 45\textdegree{} ($\dot{\beta}_0\!=\!30$\textdegree/s) & Extreme \\
\bottomrule
\end{tabular}}
\end{table}

\subsection{Set vs Point vs Point+Margin (Main Result)}
We compare three CBF evaluation modes on the same system with identical encoder, controller, and dynamics. The only difference is how the certificate $h$ is evaluated. Set-valued evaluation reports $h \geq 0$ in 44.4\% of per-head evaluations 
(1,477 out of 3,324 certificate evaluations across five scenarios), compared to 48.5\% for point evaluation (1,612 out of 3,324), confirming that set evaluation is more conservative. The 4.1\% gap (135 out of 3,324 evaluations) corresponds to blind spots where point evaluation incorrectly reports safety; this mechanism is analyzed in Section~\ref{sec:blind}. Table~\ref{tab:results} shows that this conservatism translates into superior actual safety:

\begin{itemize}
    \item SET (ours): $h_{\min} = w^\top c_z + b - \sum_j |w^\top G_{z,j}|$, adaptive per head and per step
    \item POINT: $h_{\min} = w^\top c_z + b$, evaluated at the center only
    \item PT\_MARGIN: $h_{\min} = w^\top c_z + b - \delta$ with fixed $\delta = 0.0310$, the global mean spread across all certificate heads and timesteps
\end{itemize}

\begin{table}[!htbp]
\caption{Gate passage results across five scenarios. $^\text{d}$divergence, $^\text{v}$vertical, $^\text{l}$lateral, $^\text{b}$body collision.}
\label{tab:results}
\centering
\resizebox{\columnwidth}{!}{
\begin{tabular}{lccccccc}
\toprule
\textbf{Mode} & \textbf{Score} & \textbf{GOOD} & \textbf{BAD} & \textbf{HARD} & \textbf{HARD1} & \textbf{HARD2} \\
\midrule
SET (ours) & 5/5 & $\checkmark$ & $\checkmark$ & $\checkmark$ & $\checkmark$ & $\checkmark$ \\
POINT & 1/5 & $\times^\text{d}$ & $\checkmark$ & $\times^\text{v}$ & $\times^\text{l}$ & $\times^\text{b}$ \\
PT\_MARGIN & 2/5 & $\checkmark$ & $\checkmark$ & $\times^\text{v}$ & $\times^\text{l}$ & $\times^\text{l}$ \\
\bottomrule
\end{tabular}}
\end{table}

Fig.~\ref{fig:gate3d} illustrates the HARD scenario ($\alpha_0 = 30^\circ$, 4 m from the gate). The set-based controller (blue) guides both the quadrotor and the suspended load safely through the gate, while the point-based controller (red) results in a load collision at the gate frame. The faint trails show the full trajectories; quadrotor and load icons are drawn at key timesteps with the collision moment frozen in red. The failure modes of POINT vary across scenarios: load-gate contact in the vertical direction (HARD), lateral direction (HARD1), quadrotor body collision (HARD2), and trajectory divergence (GOOD), confirming that the five scenarios exercise distinct safety-critical failure mechanisms that the set-based controller successfully avoids. The root cause of these failures is analyzed in Section V-C: point evaluation reports safety at moments when the state uncertainty set boundary is already unsafe. Section~V-C provides temporal evidence for this mechanism (Fig.~\ref{fig:htraces}): the last blind spot, where point evaluation still reports safety but set evaluation already detects danger, occurs at $t = 0.52$\,s, well before the gate plane. The set-based controller thus intervenes early enough to correct the trajectory, whereas the point-based controller detects the violation too late to correct and collides with the gate at $t = 1.32$\,s.

\subsection{Why Set Evaluation Helps: Blind Spot Analysis}
\label{sec:blind}
We analyze the gap between point and set evaluation along all trajectories:
\begin{equation}
\text{gap}(t) = h(E(x_c)) - \min_{z \in E(\mathcal{X})} h(z) = \sum_j |w^\top G_{z,j}|
\end{equation}

Over a total of 3324 certificate evaluations across all five scenarios, we find 135 instances (4.1\%) where point evaluation reports safety ($h_{\text{point}} \geq 0$) but set evaluation detects danger ($h_{\text{set}} < 0$). Table~\ref{tab:spread} details the per-head spread statistics. Fig.~\ref{fig:htraces} illustrates this on the HARD scenario: the point-based $h$ value (dashed) remains positive throughout, whereas the set-based $h_{\min}$ (solid) dips below zero near the gate, correctly detecting the unsafe region that ultimately leads to collision. This confirms the failure mechanism visible in Fig.~\ref{fig:gate3d}: without early detection of boundary violations, the point-based controller detects the violation too late to prevent the load from striking the gate frame. In contrast, the set-based controller detects these violations early and steers both the quadrotor and the load safely through the gate.

\begin{figure}[t]
\centering
\includegraphics[width=\columnwidth]{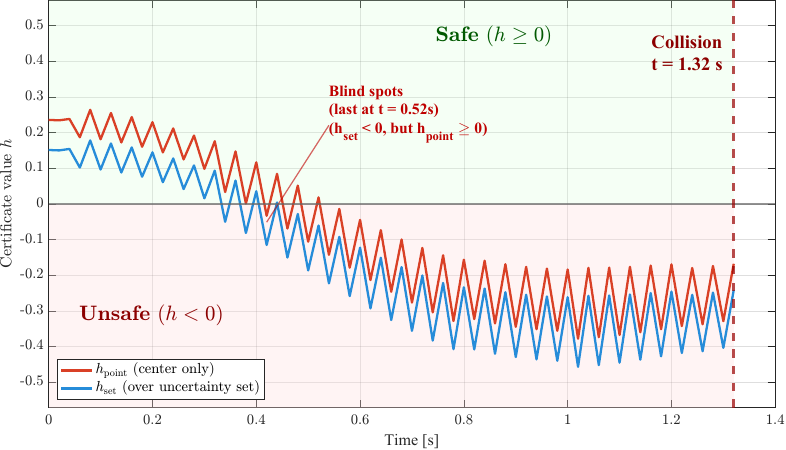}
\caption{Certificate traces $h_{\text{point}}$ (dashed) and $h_{\min} = \min_{z \in \mathcal{Z}} h(z)$ (solid) for the vertical clearance head $h_z$ on the HARD scenario. The shaded region marks the blind spot where point evaluation reports safe but set evaluation detects danger.}
\label{fig:htraces}
\end{figure}

\begin{figure*}[t]
\centering
\includegraphics[width=\textwidth]{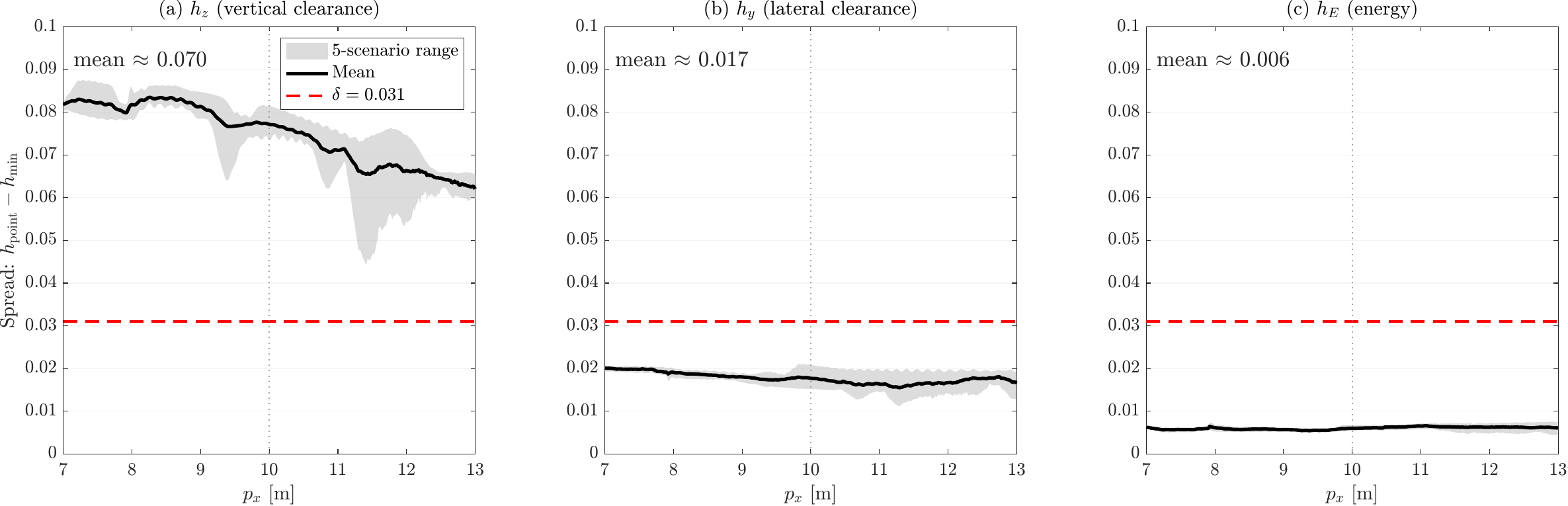}
\caption{Per-head certificate spread (gap between point and set evaluation) across all five scenarios as a function of position~$p_x$. Gray band: min--max range over the five scenarios; black line: mean. Red dashed line: fixed margin $\delta = 0.031$ used by PT\_MARGIN. The vertical dotted line marks the gate plane at $p_x = 10$\,m. The spread is highly anisotropic: $\delta$ falls below most of the $h_z$ spread, indicating insufficient coverage, while exceeding the $h_E$ spread by $5\times$, wasting control authority. Curves are interpolated onto a common $p_x$ grid and lightly smoothed for visualization; all reported metrics use unsmoothed data.}
\label{fig:spread}
\end{figure*}
\begin{table}[!htbp]
\caption{Per-head spread statistics between set and point evaluation.}
\label{tab:spread}
\centering
\footnotesize
\begin{tabular}{lccc}
\toprule
\textbf{Head} & \textbf{Mean spread} & \textbf{Max spread} & \textbf{Blind spots} \\
\midrule
$h_z$ & 0.0702 & 0.0943 & 119 \\
$h_y$ & 0.0171 & 0.0238 & 16 \\
$h_E$ & 0.0057 & 0.0089 & 0 \\
\bottomrule
\end{tabular}
\end{table}

The spread is highly anisotropic (Fig.~\ref{fig:spread}): $h_z$ exhibits $12.3\times$ larger mean spread than $h_E$ (0.070 vs.\ 0.006). Visually, wherever the spread curve lies above the red dashed line ($\delta = 0.031$), the fixed margin is insufficient to cover the gap between point and set evaluation, leaving blind spots undetected; conversely, wherever the curve lies well below the line, the margin over-compensates and wastes control authority. For $h_z$ (left panel), the spread exceeds $\delta$ throughout the approach, confirming that the fixed margin cannot guarantee set-level safety for this head. For $h_y$ and $h_E$ (center and right panels), $\delta$ exceeds the spread by $2\text{--}5\times$, indicating unnecessary conservatism. The wide gray band in the $h_z$ panel further reveals large variability across scenarios, meaning the gap fluctuates substantially with initial conditions. Moreover, the spread varies with position---peaking near the gate and decaying afterward---so even a per-head fixed margin would use the worst case across all timesteps, over-compensating at most moments. Set evaluation provides the minimal sufficient margin, adaptive per certificate head and per timestep. The dominance of $h_z$ spread indicates that point evaluation is most prone to underestimating safety risk along the vertical axis, consistent with the gravity-driven pendulum dynamics that amplify state uncertainty in this direction. The actual collision axis, however, depends on the scenario geometry: HARD fails vertically ($h_z$), HARD1 laterally ($h_y$), and HARD2 through body contact, confirming that set evaluation is necessary across all certificate heads.

\subsection{Conservatism: 16D Linear Certificate Limitations}
To understand why a nonlinear encoder is needed, we fit linear certificates $h_{\text{full}}(x) = w_{\text{full}}^\top x + b_{\text{full}}$ directly on the 16D state space using least-squares regression:

\begin{table}[t]
\caption{Linear certificate fit quality in the 16D state space.}
\label{tab:conservatism}
\centering
\footnotesize
\begin{tabular}{lc}
\toprule
\textbf{Head} & \textbf{16D linear $R$} \\
\midrule
$h_z$ (vertical) & 0.9962 \\
$h_y$ (lateral)  & 0.3841 \\
$h_E$ (energy)   & 0.7616 \\
\bottomrule
\end{tabular}
\end{table}

A linear certificate fit directly on the 16D state (Table~\ref{tab:conservatism}) achieves $R = 0.38$ for lateral clearance, indicating that a linear function in original space cannot capture the nonlinear dependence of gate clearance on coupled quadrotor position and pendulum angles (the load position involves $\sin\alpha\sin\beta$ terms). The learned encoder addresses this by extracting nonlinear features in which safety semantics become more accessible to linear certificate heads, and the set-based evaluation then provides worst-case safety certificates over the state uncertainty set within this learned representation.

\subsection{Limitations}
First, the full active control step adds 0.92 ms of computational overhead, which remains within the 20 ms budget at 50 Hz but may become prohibitive for faster control rates or deeper encoder architectures. Second, the directed dynamics error bound uses a 99.5\% quantile, which provides high-confidence but not formal worst-case coverage; a deterministic bound would require Lipschitz analysis of the latent dynamics model. Third, the numerical instantiation of Proposition 1 on the current model yields a conservative margin due to the conjugacy gap of the learned dynamics; tighter training or certified Lipschitz bounds could close this gap in future work. Finally, the safety guarantee of Proposition 1 assumes the system state remains within the domain where the encoder's approximate conjugacy holds; states far outside the training distribution may violate this assumption.

\section{Conclusion}
We extended the latent representation framework of~\cite{c3} to set-valued states, enabling worst-case safety certificates that cover entire state uncertainty sets rather than single points. The method combines sound zonotope propagation through the learned encoder with worst-case certificate evaluation over the resulting latent set. Experiments on a 16-dimensional quadrotor suspended-load system demonstrate that set-valued evaluation reports a lower (more conservative) safety rate than point evaluation (44.4\% vs.\ 48.5\% of per-head certificate evaluations reporting $h \geq 0$). This 4.1\% gap corresponds to blind spots where point evaluation misses boundary violations. By detecting these violations earlier, set-based control intervenes in time to prevent collision and achieves 100\% collision-free passages, compared to 20\% for point-based and 40\% for a fixed-margin baseline. Moreover, the safety gap varies up to $12\times$ across certificate heads, confirming that set evaluation provides adaptive margins per head and per timestep that no single fixed threshold can replicate.

Future directions include deploying the set-valued framework on robotic hardware, scaling to higher-dimensional systems, and learning the dynamics model end-to-end within the set framework. Extending the approach to vision-based robotic systems, where perceptual state uncertainty is large and out-of-distribution safeguards are needed, is also of interest.

\bibliographystyle{IEEEtran}
\bibliography{refs}

\end{document}